\DeclareFontFamily{T1}{pzc}{}
\DeclareFontShape{T1}{pzc}{m}{it}{<-> [1.2] pzcmi8t}{}
\DeclareMathAlphabet{\mathpzc}{T1}{pzc}{m}{it}
\newtheorem{theorem}{Theorem}
\newtheorem{defn}{Definition}
\newtheorem{lem}{Lemma}
\newtheorem{assump}{Assumption}
\newtheorem{remark}{Remark}
\newtheorem{prob}{Problem}
\newtheorem{experiment}{Experiment}
\newcommand{\abs}[1]{\left\lvert{#1}\right\rvert}
\newcommand{\norm}[1]{\left\lVert#1\right\rVert}
\newcommand{\pmat}[1]{\begin{pmatrix}#1\end{pmatrix}}
\newcommand{\R}{\mathbb{R}}
\newcommand{\N}{\mathbb{N}}
\renewcommand{\P}{\mathcal{P}}
\newcommand{\diag}{\text{diag}}
\newcommand{\F}{\mathcal{F}}
\newcommand{\PP}{\mathbb{P}}
\DeclareMathOperator{\minimize}{minimize}
\DeclareMathOperator{\sbjto}{subject\;to}
\renewcommand{\P}{\mathcal{P}}
\newcommand{\EE}{\mathbb{E}}
\newcommand{\Svec}{\mathcal{S}}
\newcommand{\is}{i_s}
\newcommand{\iu}{i_u}
\title{A probabilistic scheduling algorithm\\for networked control systems}
\author{Meghna Singh and Atreyee Kundu}
\thanks{MS is with the Department of Electrical and Electronics Engineering, PES University Bangalore, India, e-mail: megsingh2212@gmail.com. AK is with the Department of Electrical Engineering, Indian Institute of Technology, Kharagpur, India, Email: atreyee@ee.iitkgp.ac.in.}
\keywords{}
\date{\today}
\begin{document}

	\begin{abstract}
	This paper deals with the design of scheduling logics for Networked Control Systems (NCSs) whose communication networks have limited capacity. We assume that only a subset of the plants can communicate with their controllers at any time instant. Our contributions are twofold. First, we present a probabilistic algorithm to design scheduling logics that, under certain conditions on the plant and controller dynamics and the capacity of the network, ensure stochastic stability of each plant in an NCS. Second, given the plant dynamics and the capacity of the network, we design static state-feedback controllers such that the conditions for stability under our scheduling logics are satisfied. The main apparatus for our analysis is a Markovian jump linear system representation of the individual plants in an NCS. Our stability conditions involve sets of matrix inequalities. We present numerical experiments to demonstrate our results.
	\end{abstract}

    \maketitle

\section{Introduction}
\label{s:intro}
     Networked Control Systems (NCSs) are an integral part of modern day Cyber-Physical Systems (CPS) and Internet of Things (IoT) applications. While these applications typically involve a large number of plants, bandwidth of shared communication networks is often limited. The scenario in which the number of plants sharing a communication network is higher than the capacity of the network is called \emph{medium access constraint}. This scenario motivates the need to allocate the communication network to each plant in a manner so that good qualitative and quantitative properties of the plants are preserved. This task of efficient allocation of a shared communication network is commonly referred to as a \emph{scheduling problem} and the corresponding allocation scheme is called a \emph{scheduling logic}. In this paper we study algorithmic design of scheduling logics for NCSs.

    The existing classes of scheduling logics can be classified broadly into two categories: \emph{static} and \emph{dynamic}. In case of the former, a finite length allocation scheme of the network is determined offline and is applied eternally in a periodic manner, while in case of the latter, the allocation of the shared network is determined based on some information about the plant (e.g., states, outputs, access status of sensors and actuators, etc.), see \cite{Walsh2001} for a detailed discussion. In this paper we consider a shift in paradigm and present probabilistic scheduling logics for NCSs.
    
   We study an NCS consisting of multiple discrete-time linear plants whose feedback loops are closed through a shared communication network. A block diagram of such an NCS is shown in Figure \ref{fig:ncs}.
    \begin{figure}[htbp]
    	\begin{center}
	\scalebox{0.6}{
	\begin{tikzpicture}[every path/.style={>=latex},base node/.style={draw,rectangle, scale = 1.4}]
	\node[base node] (a) at (-2,5) {Controller 1};
	\node[base node] (b) at (3.5,4) {Plant 1};
	\node[base node] (c) at (-2,2) {Controller 2};
	\node[base node] (d) at (3.5,1) {Plant 2};
	\node[base node] (e) at (-2,-2) {Controller N};
	\node[base node] (f) at (3.5,-3) {Plant N};	
	
	\draw (-4.5 ,5) edge (a);
	\draw (-4.5,5) edge (-4.5,4);
	\draw[->] (-4.5,4) -- (0,4);
	\draw (a) edge (0.4,5);
	\draw[->] (b) -- (5.5,4);
	\draw (5.5,4) edge (5.5,5);
	\draw[->] (1,4) -- (b);
	\draw[-.] (1,4) -- (0.4,4.5);
	\draw (5.5,5) edge (1.1,5);
	\draw[-.] (1.1,5) -- (0.6,5.5);

	\draw (-4.5 ,2) edge (c);
	\draw (-4.5,2) edge (-4.5,1);
	\draw[->] (-4.5,1) -- (0,1);
	\draw (c) edge (0.4,2);
	\draw[->] (d) -- (5.5,1);
	\draw (5.5,1) edge (5.5,2);
	\draw[->] (1,1) -- (d);
	\draw[-.] (1,1) -- (0.4,1.5);
	\draw (5.5,2) edge (1.1,2);
	\draw[-.] (1.1,2) -- (0.6,2.5);

	\draw (-4.5 ,-2) edge (e);
	\draw (-4.5,-2) edge (-4.5,-3);
	\draw[->] (-4.5,-3) -- (0,-3);
	\draw (e) edge (0.4,-2);
	\draw[->] (f) -- (5.5,-3);
	\draw (5.5,-3) edge (5.5,-2);
	\draw[->] (1,-3) -- (f);
	\draw[-.] (1,-3) -- (0.4,-2.5);
	\draw (5.5,-2) edge (1.1,-2);
	\draw[-.] (1.1,-2) -- (0.6,-1.5);

	\draw[dashed] (-0.4,-4) -- (-0.4,6);
	\draw[dashed] (2,-4) -- (2,6);
	\draw[dashed] (-0.4,-4) -- (2,-4);
	\draw[dashed] (-0.4,6) -- (2,6);

	\node (g) at (0.75,-4.5) {Communication network};
	\node (h) at (3.5,-0.5) {\(\vdots\)};

	\end{tikzpicture}
	}
	\caption{Block diagram of NCS}\label{fig:ncs}
	\end{center}
    \end{figure}
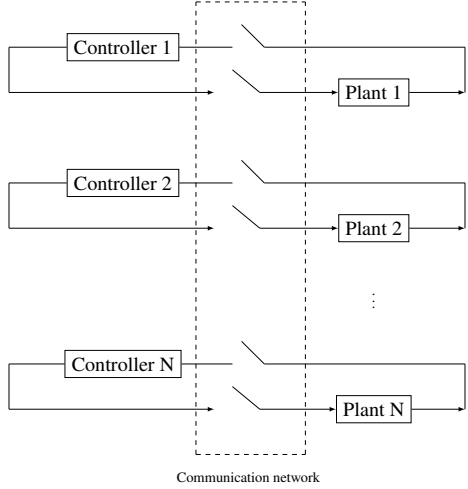
    We assume that the plants are unstable in open-loop and exponentially stable in closed-loop. Due to a limited communication capacity of the network, only a few plants can exchange information with their controllers at any instant of time. Consequently, the remaining plants operate in open-loop at every time instant. Our contributions are twofold: \\
    \begin{itemize}[label = \(\circ\),leftmargin=*]
    	\item We present an algorithm to design scheduling logics. At every instant of time, our algorithm allocates the shared network to subsets of the plants with certain probabilities. We present necessary and sufficient conditions on the plant dynamics and the capacity of the shared network under which a scheduling logic obtained from our algorithm ensures stochastic stability of each plant in the NCS.
    	\item Given plant dynamics and capacity of the shared network, we present an algorithm to design static state-feedback controllers such that the plants, their controllers and the shared network together satisfy our stability conditions.
    \end{itemize}
    The proposed stability conditions are derived using a Markovian jump linear systems modelling of the individual plants. They involve matrix inequalities and can be verified by using standard matrix inequality solver toolboxes.

    The remainder of this paper is organized as follows: In \S\ref{s:prob_stat} we formulate the problem under consideration. Our results appear in \S\ref{s:mainres}. We also describe various features of our results in this section. Numerical experiments are presented in \S\ref{s:num_ex}. We conclude in \S\ref{s:concln} with a brief discussion on future research direction.

    {\bf Notation}. \(\R\) is the set of real numbers and \(\N\) is the set of natural numbers, \(\N_0 = \N\cup\{0\}\). For two scalars \(a\) and \(b\), \(a\%b\) denotes the remainder of the operation \(a/b\). For a finite set \(C\), its cardinality is denoted by \(\abs{C}\). For a vector \(v\), \(\norm{v}\) denotes its Euclidean norm. For symmetric block matrices, \(\bigstar\) acts as ellipsis for the terms that are introduced by symmetry, \(\diag(Q_1,Q_2,\ldots,Q_n)\) denotes a block-diagonal matrix with diagonal elements \(Q_1,Q_2,\ldots,Q_n\). \(0_{d\times d}\) and \(I_{d\times d}\) denote \(d\)-dimensional \(0\)-matrix and identity matrix, respectively. We will operate in a probabilistic space \((\Omega,\F,\PP)\), where \(\Omega\) is the sample space, \(\F\) is the \(\sigma\)-algebra of events, and \(\PP\) is the probability measure.
\section{Problem statement}
\label{s:prob_stat}
    We consider an NCS with \(N\) plants whose dynamics are given by
    \begin{align}
    \label{e:plants}
        x_i(t+1) = A_i x_i(t) + B_i u_i(t),\:x_i(0) = x_i^0,\:t\in\N_0,
    \end{align}
    where \(x_i(t)\in\R^{d_i}\) and \(u_i(t)\in\R^{m_i}\) are the vectors of states and inputs of the \(i\)-th plant at time \(t\), respectively, \(i=1,2,\ldots,N\). Each plant \(i\) employs a state-feedback controller \(u_i(t) = K_i x_i(t)\), \(t\in\N_0\). The matrices \(A_i\in\R^{d_i\times d_i}\), \(B_i\in\R^{d_i\times m_i}\) and \(K_i\in\R^{m_i\times d_i}\), \(i=1,2,\ldots,N\) are constants.

     \begin{assump}
    \label{a:stability}
        The open-loop dynamics of each plant is unstable and each controller is stabilizing. More specifically, the matrices \(A_i+B_iK_i\), \(i=1,2,\ldots,N\) are Schur stable and the matrices \(A_i\), \(i=1,2,\ldots,N\) are unstable.\footnote{A matrix \(A\in\R^{d\times d}\) is Schur stable if all its eigenvalues are inside the open unit disk. We call \(A\) unstable if it is not Schur stable.}
    \end{assump}
    The controllers are remotely located and each plant communicates with its controller through a shared communication network. The network has a limited communication capacity in the sense that at any time instant, only \(M\) plants (\(0<M<N\)) can access the network. Consequently, the remaining \(N-M\) plants operate in open loop.
    \begin{assump}
    \label{a:ideal}
        The communication network is ideal in the sense that exchange of information between plants and their controllers is not affected by communication uncertainties.
    \end{assump}

    Let \(\is\) and \(\iu\) denote the stable and unstable modes of the \(i\)-th plant, respectively, \(A_{\is} = A_i+B_i K_i\) and \(A_{\iu} = A_i\), \(i=1,2,\ldots,N\). We let
    \[
        \Svec = \{s\in\{1,2,\ldots,N\}^{M}\:|\:\text{the elements of \(s\) are distinct}\}
    \]
    be the set of all subsets of \(\{1,2,\ldots,N\}\) with cardinality \(M\). We call a function \(\gamma:\N_0\to\Svec\), that specifies, at every time \(t\), \(M\) plants of the NCS which access the shared network at that time, as a \emph{scheduling logic}. Let \(r_i^0\) denote the initial mode of operation of plant \(i\), i.e., \(r_i^0 =\is\), if \(i\in\gamma(0)\) and \(r_i^0 =\iu\), if \(i\notin\gamma(0)\).
    We will focus on stochastic stability of the plants. 
    \begin{defn}
    \label{d:stability}
        The \(i\)-th plant in \eqref{e:plants} is \emph{stochastically stable} if for every initial condition \(x_i^0\in\R^{d_i}\) and initial mode of operation \(r_i^0\in\{\is,\iu\}\), we have that
            \(\displaystyle{\EE\Biggl\{\sum_{t=0}^{+\infty}\norm{x_i(t)}^{2}\:|\:x_i^0,r_i^0\Biggr\}<+\infty}\).
    \end{defn}

    Our first objective is:
    \begin{prob}
    \label{prob:main1}
        Given the matrices \(A_i\), \(B_i\), \(K_i\), \(i=1,2,\ldots,N\) and the number \(M\), design a scheduling logic, \(\gamma\), that preserves stochastic stability of each plant \(i\) in the NCS.
    \end{prob}
    Towards solving Problem \ref{prob:main1}, we will first present a probabilistic algorithm. We will then identify conditions on the matrices \(A_{\is}\), \(A_{\iu}\), \(i=1,2,\ldots,N\) and the network capacity, \(M\), such that stochastic stability of each plant \(i\) in the NCS is ensured under a scheduling logic obtained from our algorithm. 
    
    Our second objective is:
     \begin{prob}
    \label{prob:main2}
        Given the matrices \(A_i\), \(i=1,2,\ldots,N\) and the network capacity, \(M\), design static state-feedback controllers, \(K_i\), \(i=1,2,\ldots,N\), such that the conditions for stability under our scheduling logics are satisfied.
    \end{prob}
   Towards designing suitable state-feedback controllers, we will solve a set of feasibility problems involving LMIs.
\section{Main results}
\label{s:mainres}
\subsection{Stabilizing scheduling logics}
\label{ss:mainres1}
	We first present our solution to Problem \ref{prob:main1}. We will operate under the following assumption:
	\begin{assump}
	\label{a:divisibility}
	\rm{
		The total number of plants, \(N\) and the capacity of the shared communication network, \(M\) together satisfy \(N\%M = 0\).
	}
	\end{assump}
	
	Assumption \ref{a:divisibility} ensures that the total number of plants, \(N\), in the NCS is divisible by the capacity of the shared network, \(M\). In other words, the \(N\) plants can be divided into an integer number of chunks of \(M\) plants. Let \(v = N/M\). Towards designing a scheduling logic, we rely on disjoint sets \(c_1,c_2,\ldots,c_v\in\Svec\) and scalars \(p_{c_1},p_{c_2},\ldots\),\(p_{c_{v}}\in]0,1[\) that satisfy \(\displaystyle{\sum_{j=1}^{v}p_{c_{j}}}=1\). 
	 
	 Suppose that \(c_1,c_2,\ldots,c_v\) and \(p_{c_1},p_{c_2},\ldots\),\(p_{c_{v}}\) are fixed. A scheduling logic, \(\gamma\), is generated as follows: at each time instant \(t=0,1,2,\ldots\), we allocate the shared network to the plants in \(c_j\) with probability \(p_{c_j}\), \(j\in\{1,2,\ldots,v\}\). This procedure is summarized in Algorithm \ref{algo:sched_design}.\footnote{We will discuss how to choose the quantities \(c_1,c_2,\ldots,c_v\) and \(p_{c_1},p_{c_2},\ldots\),\(p_{c_{v}}\) favourably in a moment.}
	 \begin{algorithm}
    \begin{algorithmic}[1]
        \STATE Set \(v = N/M\).
        \STATE Pick \(c_1,c_2,\ldots,c_v\in\Svec\) such that \(c_j\cap c_k=\emptyset\) for all \(j,k=1,2,\ldots,v\), \(j\neq k\).
        \STATE Pick \(p_{c_1},p_{c_{2}},\ldots,p_{c_{v}}\in]0,1[\) such that \(\displaystyle{\sum_{j=1}^{v}p_{c_{j}}} = 1\).
        \FOR {\(t=0,1,2,\ldots\)}
            \STATE Set \(\gamma(t) = c_j\) with probability \(p_{c_{j}}\).
        \ENDFOR
    \caption{Design of a scheduling logic}\label{algo:sched_design}
    \end{algorithmic}
    \end{algorithm}
The following theorem provides necessary and sufficient conditions on the matrices, \(A_i\), \(B_i\), \(K_i\), \(i=1,2,\ldots,N\) and the capacity of the network, \(M\), under which a scheduling logic, \(\gamma\), obtained from Algorithm \ref{algo:sched_design} ensures stochastic stability of each plant in the NCS.
    \begin{theorem}
    \label{t:mainres1}
        Consider an NCS described in \S\ref{s:prob_stat}. Suppose that Assumption \ref{a:divisibility} holds. Let \(\gamma\) be a scheduling logic obtained from Algorithm \ref{algo:sched_design}. Each plant \(i\) in \eqref{e:plants} is stochastically stable under \(\gamma\) if and only if the following conditions hold:
            for each \(i\in c_j\), \(j=1,2,\ldots,v\), there exist symmetric and positive definite matrices \(P_k\in\R^{d_i\times d_i}\), \(k=\is,\iu\), such that
                \begin{align}
                \label{e:maincondn}
                    A_k^\top \P^i A_k - P_k \prec 0,
                \end{align}
                where \(\P^i = p_{c_j}P_{\is} + (1-p_{c_{j}})P_{\iu}\).
     \end{theorem}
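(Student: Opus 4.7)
The plan is to reduce the dynamics of each individual plant to a discrete-time Markovian jump linear system (MJLS) and then invoke the classical coupled-Lyapunov characterization of stochastic stability for MJLSs.

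First I would pin down the stochastic model of a single plant \(i\) under \(\gamma\). Fix \(j\) with \(i\in c_j\); since the sets \(c_1,\ldots,c_v\) are pairwise disjoint and \(\gamma(t)\) is drawn from the categorical distribution on \(\{c_1,\ldots,c_v\}\) with parameters \(p_{c_1},\ldots,p_{c_v}\), the event \(i\in\gamma(t)\) coincides with the event \(\gamma(t)=c_j\). Hence the mode process \(r_i(t)\), defined by \(r_i(t)=\is\) if \(i\in\gamma(t)\) and \(r_i(t)=\iu\) otherwise, is an \emph{independent and identically distributed} sequence on \(\{\is,\iu\}\) with \(\PP(r_i(t)=\is)=p_{c_j}\) and \(\PP(r_i(t)=\iu)=1-p_{c_j}\); it is in particular a (memoryless) Markov chain with row-constant transition matrix. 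The closed-loop plant then reads \(x_i(t+1)=A_{r_i(t)}x_i(t)\), which is a two-mode MJLS.

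Next I would invoke the standard stochastic-stability theorem for discrete-time MJLSs (as in Costa--Fragoso--Marques): for a Markov chain with transition probabilities \(\pi_{k\ell}\), the system is stochastically stable in the sense of Definition \ref{d:stability} if and only if there exist symmetric positive definite matrices \(P_k\) solving the coupled Lyapunov inequalities \(A_k^\top\bigl(\sum_\ell \pi_{k\ell} P_\ell\bigr)A_k-P_k\prec 0\) for every mode \(k\). Specializing to the i.i.d. case established above, the inner row sum loses its dependence on the current mode \(k\) and collapses to \(\P^i=p_{c_j}P_{\is}+(1-p_{c_j})P_{\iu}\); the coupled LMIs therefore reduce verbatim to \eqref{e:maincondn}, giving both directions of the equivalence at once.

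For completeness I would sketch the direct Lyapunov argument for sufficiency, so the reader does not have to chase a reference: set \(V_i(x,k)=x^\top P_k x\) and compute
\begin{equation*}
\EE\bigl[V_i(x_i(t+1),r_i(t+1))\mid x_i(t),r_i(t)=k\bigr]=x_i(t)^\top A_k^\top\P^i A_k\, x_i(t),
\end{equation*}
using independence of \(r_i(t+1)\) from \(r_i(t)\); condition \eqref{e:maincondn} then yields a uniform bound \(\EE[V_i(x_i(t+1),r_i(t+1))\mid\mathcal{F}_t]\le V_i(x_i(t),r_i(t))-\alpha\|x_i(t)\|^2\) for some \(\alpha>0\), and summing, taking expectations and telescoping gives \(\EE\sum_{t=0}^{\infty}\|x_i(t)\|^2\le V_i(x_i^0,r_i^0)/\alpha<\infty\). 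Necessity, on the other hand, is the nontrivial direction and is really the place where one has to rely on the MJLS machinery (constructing the \(P_k\) from the second-moment dynamics); this is the step I expect to delegate to the existing literature rather than to re-derive. One small bookkeeping point to keep straight throughout is the mild redundancy in the i.i.d. model, namely that both rows of the transition matrix are identical, so the two LMIs at \(k=\is\) and \(k=\iu\) share the same right-hand ``averaged'' Lyapunov matrix \(\P^i\) but differ in the mode-dependent matrix \(A_k\) appearing on the outside.
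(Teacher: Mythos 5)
Your proposal is correct and follows essentially the same route as the paper: both model each plant as a two-mode Markovian jump linear system whose transition matrix has identical rows \((p_{c_j},\,1-p_{c_j})\), and both then invoke the standard coupled-Lyapunov LMI characterization of stochastic stability (the paper cites \cite[Lemma 2]{Zhang2009} where you cite Costa--Fragoso--Marques), observing that the row sums collapse to the common matrix \(\P^i\). Your added Lyapunov sketch for sufficiency is a nice self-contained touch but does not change the substance of the argument.
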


	\begin{remark}
	\label{rem:expln1}
    Condition \eqref{e:maincondn} involves properties of the matrices \(A_i\), \(B_i\) and \(K_i\), \(i=1,2,\ldots,N\), the disjoint sets \(c_j\), \(j\in\{1,2,\ldots,v\}\) and the probabilities \(p_{c_{j}}\), \(j\in\{1,2,\ldots,v\}\). It relies on the existence of symmetric and positive definite matrices, \(P_k\), \(k=\is,\iu\), \(i=1,2,\ldots,N\) that together with the matrices \(A_i\), \(B_i\), \(K_i\), \(i=1,2,\ldots,N\) and the probabilities \(p_{c_j}\), \(j=1,2,\ldots,v\) corresponding to the subset \(c_j\), \(j\in\{1,2,\ldots,v\}\) that plant \(i\) appears in, satisfy a set of matrix inequalities. Notice that with the quantities \(A_i\), \(B_i\), \(K_i\), \(c_j\), \(p_{c_j}\), \(j=1,2,\ldots,v\) known, the set of inequalities in \eqref{e:maincondn} can be solved by employing standard Linear Matrix Inequalities solvers.
    \end{remark}
    
    \begin{remark}
	\label{rem:expln2}
    Fix a scheduling logic, \(\gamma\), obtained from Algorithm \ref{algo:sched_design}. Theorem \ref{t:mainres1} is necessary and sufficient in the following sense: if condition \eqref{e:maincondn} holds, then \(\gamma\) is stabilizing, and if \(\gamma\) is stabilizing, then condition \eqref{e:maincondn} holds.  
    \end{remark}
    
    Towards proving Theorem \ref{t:mainres1}, we will utilize the following auxiliary result:
    \begin{lem}
    \label{lem:auxres}
        Suppose that Assumption \ref{a:divisibility} holds. Then the following are true:
        \begin{enumerate}[label = \roman*), leftmargin = *]
            \item \(\displaystyle{\bigcup_{j=1}^{v}c_j} = \{1,2,\ldots,N\}\), and
            \item for each \(i\in\{1,2,\ldots,N\}\), there exists exactly one \(j\in\{1,2,\ldots,v\}\) such that \(i\in c_j\).
        \end{enumerate}
    \end{lem}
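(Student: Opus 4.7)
The plan is a short counting argument based on cardinalities. First I would observe that by the definition of $\Svec$, each chosen subset $c_j$ is an element of $\{1,2,\ldots,N\}^{M}$ with distinct entries, so $\abs{c_j} = M$ for every $j \in \{1,2,\ldots,v\}$. Step 2 of Algorithm \ref{algo:sched_design} imposes $c_j \cap c_k = \emptyset$ for all $j \neq k$, i.e.\ the $c_j$'s are pairwise disjoint.

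To prove (i), I would combine disjointness with Assumption \ref{a:divisibility}. Since the $c_j$'s are pairwise disjoint,
\begin{equation*}
\abs[\Big]{\bigcup_{j=1}^{v} c_j} \;=\; \sum_{j=1}^{v}\abs{c_j} \;=\; vM \;=\; N,
\end{equation*}
where the last equality uses $v = N/M$ from Assumption \ref{a:divisibility}. On the other hand, each $c_j \subseteq \{1,2,\ldots,N\}$, so $\bigcup_{j=1}^{v} c_j \subseteq \{1,2,\ldots,N\}$. A subset of an $N$-element set that itself has cardinality $N$ must equal the whole set, giving (i).

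For (ii), existence of some $j$ with $i \in c_j$ is immediate from (i). For uniqueness, I would argue by contradiction: if $i \in c_j \cap c_k$ for some $j \neq k$, this contradicts the pairwise-disjointness hypothesis on the $c_j$'s established in Step 2 of Algorithm \ref{algo:sched_design}.

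There is no real obstacle; the only subtlety worth flagging is that the statement is purely set-theoretic and uses none of the matrix-theoretic content of the paper. The role of Assumption \ref{a:divisibility} is precisely to make the cardinality count $vM$ land exactly on $N$, which is what forces the union to exhaust $\{1,2,\ldots,N\}$ rather than merely be contained in it.
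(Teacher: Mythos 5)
Your proof is correct and rests on the same ingredients as the paper's: the cardinality count \(\abs{c_j}=M\), pairwise disjointness, and \(vM=N\) from Assumption \ref{a:divisibility}. The paper wraps part (i) in a proof by contradiction while you argue directly that a cardinality-\(N\) subset of \(\{1,2,\ldots,N\}\) must be the whole set, which is a cleaner phrasing of the identical counting argument.
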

     \begin{proof}
        i) Assume, by contradiction, that
        \(
            \displaystyle{\bigcup_{j=1}^{v}c_j} \neq \{1,2,\ldots\),\(N\}.
        \)
        By construction of \(c_j\), \(j=1,2,\ldots,v\), it must then be true that
            \(\displaystyle{\bigcup_{j=1}^{c}c_j\subset\{1,2,\ldots,N\}}\).
        We have \(\abs{c_j} = M\) for each \(j=1,2,\ldots,v\). Thus, \(\displaystyle{\abs{\bigcup_{j=1}^{v}c_j}} = vM=(N/M)M\). Since \(N\% M = 0\), we have \(\displaystyle{\abs{\bigcup_{j=1}^{v}c_j}} = N\). Then it must hold that there exist \(\ell\in\{1,2,\ldots,N\}\) and \(j_1,j_2,\ldots,j_q\in\{1,2,\ldots,v\}\) such that \(\ell\in c_{j_{m}}\) for each \(m=1,2,\ldots,q\). But this contradicts the fact that \(c_j\), \(j=1,2,\ldots,v\) are disjoint sets. Consequently, it must be true that \(\displaystyle{\bigcup_{j=1}^{v}c_j} = \{1,2,\ldots,N\}\).

        ii) Since \(v = N/M\) and the sets \(c_j\), \(j=1,2,\ldots,v\), are disjoint, the assertion follows at once.
    \end{proof}

    \begin{proof}{(of Theorem \ref{t:mainres1})}: 
    Fix a scheduling logic, \(\gamma\), obtained from Algorithm \ref{algo:sched_design}. We will show that condition \eqref{e:maincondn} is necessary and sufficient for stability of each plant \(i\) in \eqref{e:plants} under \(\gamma\).

    Fix \(j\in\{1,2,\ldots,v\}\) and \(i\in c_j\). 
    By Lemma \ref{lem:auxres} ii), \(i\) appears in exactly one \(c_j\). We model the plant \(i\) under \(\gamma\) as follows:
    \begin{align}
    \label{e:i-swsys}
        x_i(t+1) = A_{\sigma_i(t)}x_i(t),\:\sigma_i(t)\in\{\is,\iu\}.
    \end{align}
    Notice that \eqref{e:i-swsys} is a Markovian jump linear system whose set of subsystems is \(\{\is,\iu\}\) and the transition function \(\sigma_i\in\N_0\to\{\is,\iu\}\) satisfies \(\sigma_i(t) =\is\), if \(i\in\gamma(t)\) and \(\sigma_i(t) =\iu\), if \(i\notin\gamma(t)\).
    In particular, \(\sigma_i\) is a Markov chain, defined on \((\Omega,\F,\PP)\), taking values in \(\{\is,\iu\}\) with transition probability matrix
    \(
        \Pi_i = \pmat{\pi_{\is\is} & \pi_{\is\iu}\\\pi_{\iu\is} & \pi_{\iu\iu}},
    \)
    where
    \begin{align*}
        \begin{aligned}
        \pi_{\is\is} &= \PP(\sigma_i(t+1) = \is\:|\:\sigma_i(t) = \is) = p_{c_j},\\
        \pi_{\is\iu} &= \PP(\sigma_i(t+1) = \iu\:|\:\sigma_i(t) = \is) = 1-p_{c_j},\\
        \pi_{\iu\is} &= \PP(\sigma_i(t+1) = \is\:|\:\sigma_i(t) = \iu) = p_{c_j},\\
        \pi_{\iu\iu} &= \PP(\sigma_i(t+1) = \iu\:|\:\sigma_i(t) = \iu) = 1-p_{c_j},
        \end{aligned}
        \:\:i\in c_j.
    \end{align*}

    By \cite[Lemma 2]{Zhang2009}, the switched system \eqref{e:i-swsys} is stochastically stable if and only if the following conditions hold:
    \begin{align}
        \label{e:pf1_step1}A_{\is}^\top \bigl(\pi_{\is\is}P_{\is}+\pi_{\is\iu}P_{\iu}\bigr)A_{\is} - P_{\is} \prec 0,\\
        \intertext{and}
        \label{e:pf1_step2}A_{\iu}^\top \bigl(\pi_{\iu\is}P_{\is}+\pi_{\iu\iu}P_{\iu}\bigr)A_{\iu} - P_{\iu} \prec 0,
    \end{align}
    where \(P_{\is}\), \(P_{\iu}\in\R^{d_i\times d_i}\) are symmetric and positive definite matrices. We have
        \(\pi_{\is\is}P_{\is}+\pi_{\is\iu}P_{\iu} = p_{c_j}P_{\is} + (1-p_{c_{j}})P_{\iu}
        =\pi_{\iu\is}P_{\is}+\pi_{\iu\iu}P_{\iu} = \P^i\).
    Clearly, stochastic stability of plant \(i\) is equivalent to the conditions \eqref{e:pf1_step1}-\eqref{e:pf1_step2}.

    Since \(i\in c_j\) and \(j\in\{1,2,\ldots,v\}\) were chosen arbitrarily, stochastic stability of each plant \(\displaystyle{i\in\bigcup_{j=1}^{v}c_j}\) under \(\gamma\) is immediate. In view of Lemma \ref{lem:auxres} i), this completes our proof of Theorem \ref{t:mainres1}.
    \end{proof}

     \begin{remark}
    \label{rem:compa1}
    \rm{
     	Recall that a Markovian jump linear system is a switched system \cite[Section 1.1.2]{Liberzon} with linear subsystems; its switching logic is stochastic and can be described by a Markov chain. Switched systems  with both deterministic and stochastic switching logics have been employed to design scheduling algorithms for NCSs with communication limitations and uncertainties earlier in the literature, see e.g., \cite[Remark 11]{quevedo2020} for a detailed discussion. The primary difference of our work with the existing literature is that our scheduling algorithm is probabilistic while most of the existing design of scheduling logics by employing switched systems modelling of plants relies on purely deterministic techniques. In case of the latter, stochastic behaviour of the switching logics arises from probabilistic assumptions on the communication uncertainties typically leading to non-homogeneous Markov chains, see e.g., \cite{ghi} where a probabilistic data loss model is considered. In the current work stochastic behaviour of the switching logics arises from probabilistic scheduling logic and the switching logics are time homogeneous Markov chains.
        }
    \end{remark}
    
      \begin{remark}
    \label{rem:compa2}
    \rm{
        Qualitative and quantitative properties of conti-nuous-time linear plants communicating with their controllers under a \emph{pre-specified} stochastic scheduling logic have been studied in \cite{Fridman2015}. The problem considered in this paper differs from the said setting due to the following two reasons: (a) we focus on \emph{designing} stabilizing probabilistic scheduling logics, and (b) our plant dynamics evolve in discrete-time.
        }
    \end{remark}

    \begin{remark}
    \label{rem:compa3}
    \rm{
        Notice that our design of scheduling logics is neither static nor dynamic (a description of these terms are given in Section \ref{s:intro}). Indeed, we neither repeat a finite length allocation scheme nor take properties of the plants or other components in the NCS into consideration at every time instant. This is not surprising as the proposed design technique is solely probabilistic.
        }
    \end{remark}
    
     For selecting disjoint sets \(c_j\), \(j\in\{1,2,\ldots,v\}\) and the probabilities \(p_{c_{j}}\), \(j\in\{1,2,\ldots,v\}\) such that condition \eqref{e:maincondn} holds, we employ an exhaustive search over all combinations of \(v\)-many disjoint sets \(\overline{c}_j\in\Svec\), \(j=1,2,\ldots,v\) and probabilities \(\overline{p}_{\overline{c}_j}\in]0,1[\), \(j=1,2,\ldots,v\) such that \(\displaystyle{\sum_{j=1}^{v}\overline{p}_{\overline{c}_{j}}} = 1\) holds.\footnote{A search over all combinations of \(v\)-many disjoint sets \(\overline{c}_{j}\in\Svec\), \(j=1,2,\ldots,v\) suffices in view of Lemma \ref{lem:auxres}.} The interval \(]0,1[\) is sampled with a (small enough) step size \(h > 0\). Let \(r\) be the biggest integer satisfying \(rh < 1\). For all combinations of \(v\)-many disjoint sets \(\overline{c}_j\in\Svec\), \(j=1,2,\ldots,v\) and all choices of \(\overline{p}_{\overline{c}_j}\in\{h,2h,3h,\ldots,rh\}\) such that \(\displaystyle{\sum_{j=1}^{v}\overline{p}_{\overline{c}_{j}}} = 1\), we solve a feasibility problem for all plants \(i=1,2,\ldots,N\). It outputs, if exist, symmetric and positive definite matrices, \(P_k\), \(k=\is,\iu\), \(i=1,2,\ldots,N\) that together with the matrices \(A_i\), \(B_i\), \(K_i\), \(i=1,2,\ldots,N\) and the probabilities \(\overline{p}_{\overline{c}_j}\), \(j=1,2,\ldots,v\) satisfy condition \eqref{e:maincondn}. If an output is obtained, then we assign \(c_j = \overline{c}_j\) and \(p_{c_j} = \overline{p}_{\overline{c}_j}\), \(j=1,2,\ldots,v\). Otherwise, we do not have suitable inputs for Algorithm \ref{algo:sched_design}. The procedure is summarized in Algorithm \ref{algo:exhaust_search}.
    	 \begin{algorithm}
    \begin{algorithmic}[1]
        \STATE Construct the set \(\Svec\).
        \STATE Fix a step size \(h > 0\) (small enough). Compute \(r\) as the biggest integer satisfying \(rh < 1\).
        \FOR {all \(\overline{c}_1,\overline{c}_2,\ldots,\overline{c}_v\in\Svec\) such that \(\overline{c}_j\cap \overline{c}_k=\emptyset\) for all \(j,k=1,2,\ldots,v\), \(j\neq k\)}
        		\FOR {\(\overline{p}_{\overline{c}_1} = h,2h,\ldots,rh\)}
			\FOR {\(\overline{p}_{\overline{c}_2} = h,2h,\ldots,rh\)}
				\STATE \(\vdots\)
				\FOR {\(\overline{p}_{c_v} = h,2h,\ldots,rh\)}
					\IF {\(\displaystyle{\sum_{j=1}^{v}\overline{p}_{\overline{c}_{j}}} = 1\)}
						\STATE Solve the following feasibility problem for \(P_k\), \(k=\is,\iu\), \(i=1,2,\ldots,N\):
						\small
						\begin{align}
						\label{e:feas_prob}
							\hspace*{-3cm}\minimize\:\:&\:\:1\\
							\sbjto\:\:&\:\:
							\begin{cases}
								A_{\is}^\top \biggl(\overline{p}_{\overline{c}_j}P_{\is}+\bigl(1-\overline{p}_{\overline{c}_j}\bigr)P_{\iu}\biggr) A_{\is} \nonumber\\- P_{\is} \prec 0,\nonumber\\
								A_{\iu}^\top \biggl(\overline{p}_{\overline{c}_j}P_{\is}+\bigl(1-\overline{p}_{\overline{c}_j}\bigr)P_{\iu}\biggr) A_{\iu}\nonumber\\ - P_{\iu} \prec 0,\nonumber\\
								P_{\is} = P_{\is}^\top,\:\:P_{\is}\succ 0,\nonumber\\
								P_{\iu} = P_{\iu}^\top,\:\:P_{\iu}\succ 0,\nonumber\\
								\kappa I_{d_i\times d_i}\preceq P_{\is}, P_{\iu} \preceq I_{d_i\times d_i},\\
								\kappa > 0\:(\text{small}),\\
								i=1,2,\ldots,N.
							\end{cases}
						\end{align}
						\normalsize
						\STATE If a solution to \eqref{e:feas_prob} is obtained, then go to Step \ref{step:final}.
					\ENDIF
				\ENDFOR
			\ENDFOR
		\ENDFOR
		  \ENDFOR
		\STATE \label{step:final} Set \(c_j = \overline{c}_j\) and \(p_{c_j} = \overline{p}_{\overline{c}_j}\), \(j=1,2,\ldots,v\) and exit.
       \caption{Selection of \(c_j\) and \(p_{c_{j}}\), \(j = 1,2,\ldots,v\)}\label{algo:exhaust_search}
    \end{algorithmic}
    \end{algorithm}
    
    \begin{remark}
    \label{rem:bounds_search}
    \rm{
    	Notice that the conditions \(\kappa I_{d_i\times d_i}\preceq P_{\is}, P_{\iu} \preceq I_{d_i\times d_i}\) in the feasibility problem \eqref{e:feas_prob} is not inherent to the set of inequalities \eqref{e:maincondn}. It is included for numerical reasons. In particular, \(\kappa I_{d_i\times d_i}\preceq P_{\is}, P_{\iu}\) limits the condition numbers of \(P_{\is}\) and \(P_{\iu}\) to \(\kappa^{-1}\), and the condition \(P_{\is}, P_{\iu} \preceq I_{d_i\times d_i}\) guarantees that the set of feasible \(P_{\is}\), \(P_{\iu}\) is bounded. Here, we have \(i=1,2,\ldots,N\).
    }
    \end{remark}
    
    \begin{remark}
    \label{rem:complexity}
    \rm{
    	Algorithm \ref{algo:exhaust_search} has a large computational complexity when the number of plants, \(N\) and their dimensions, \(d_i\), \(i=1,2,\ldots,N\) are large. However, selection of \(c_j\) and \(p_{c_j}\), \(j=1,2,\ldots,v\) is an offline process. Indeed, they are to be chosen only once prior to the generation of a scheduling logic.
    }
    \end{remark}
    
    Suppose that suitable sets \(c_j\) and scalars \(p_{c_{j}}\), \(j=1,2,\ldots,v\) are obtained from Algorithm \ref{algo:exhaust_search}. A next natural question is: how do we choose an element \(c_j\) with a probability \(p_{c_j}\), \(j\in\{1,2,\ldots,v\}\) at every instant of time \(t\in\N_0\)? Clearly, using a standard random number \(j\in\{1,2,\ldots,v\}\) generator is not sufficient as we have a probability \(p_{c_j}\) associated to every \(c_j\). We employ Algorithm \ref{algo:impl} for this purpose.
    \begin{algorithm}
    \begin{algorithmic}[1]
        \STATE Fix \(T\in\N\) (large enough).
        \FOR {\(j=1,2,\ldots,v\)}
            \STATE Set the frequency of occurrence of \(c_j\) as \(f_{c_j} = p_{c_j}\times T\).
        \ENDFOR
        \STATE Construct a set \(TEMP\) that contains \(f_{c_{j}}\) instances of \(c_j\), \(j=1,2,\ldots,v\), i.e.,
           \small \(\displaystyle{TEMP = \bigcup_{j=1}^{v}\biggl\{c_j^{1},c_j^{2},\ldots,c_j^{f_{c_j}}\biggr\}}\).\normalsize
        \FOR {\(t = 0,1,\ldots,T-1\)}
            \STATE Pick an element \(r\) from \(TEMP\) uniformly at random, set \(\gamma(t) = r\) and \(TEMP = TEMP\setminus\{r\}\).
        \ENDFOR
    \caption{Implementation of Algorithm \ref{algo:sched_design}}\label{algo:impl}
    \end{algorithmic}
    \end{algorithm}
    It involves four steps: First, a time horizon \(\{0,1,\ldots,T-1\}\) is fixed, where \(T\in\N\) is a large number. Second, the frequency of occurrence of each \(c_j\) in \(\{0,1,\ldots,T-1\}\) is computed as \(f_{c_j} = p_{c_j}\times T\), \(j=1,2,\ldots,v\). Notice that
       \(\displaystyle{\sum_{j=1}^{v}f_{c_j} = \sum_{j=1}^{v}p_{c_j}\times T = T\sum_{j=1}^{v}p_{c_{j}} =}\)\({T}\).
    Third, a set \(TEMP\) is created with \(f_{c_j}\)-many instances of \(c_j\), \(j=1,2,\ldots,v\). It follows that \(\abs{TEMP} = T\). Fourth, at each time \(t=0,1,\ldots,T-1\), an element \(r\) from \(TEMP\) is chosen uniformly at random, is assigned to \(\gamma(t)\), and the set \(TEMP\) is updated to be \(TEMP\setminus\{r\}\). Clearly, the sequence \(\gamma(0),\gamma(1),\ldots\), \(\gamma(T-1)\) obeys the frequency of occurrence, \(f_{c_j}\), for the set \(c_j\), \(j=1,2,\ldots,v\). Our procedure for implementing Algorithm \ref{algo:sched_design}, however, has a large memory requirement when the numbers \(v\) and \(T\) are very large.
    
    \begin{remark}
    \label{rem:inhomo1}
    \rm{
    	Recall that we have been operating under Assumption \ref{a:divisibility}. The requirement for this assumption is purely technical and specific to our key apparatus of analysis. With \(N\%M\neq 0\) and the probabilistic logic for the selection of plants employed in Algorithm \ref{algo:sched_design}, the individual plants cannot be modelled as Markovian jump linear systems whose transition process is a time homogeneous Markov chain. Indeed, consider the Markovian jump linear system modelling of each plant in NCS under a scheduling logic, \(\gamma\), obtained from Algorithm \ref{algo:sched_design} as employed in our proof of Theorem \ref{t:mainres1}. We could use a time homogeneous Markov chain under the assertion of Lemma \ref{lem:auxres} ii). If \(N\%M\neq 0\) and \(v=\lceil N/M\rceil\), then there exists at least one \(i\in\{1,2,\ldots,N\}\) that appears in more than one \(c_j\), \(j=1,2,\ldots,v\). Consequently, the probability of transition to mode \(\is\) are possibly multiple for different \(j\in\{1,2,\ldots,v\}\) such that \(i\) appears in \(c_j\). As a result, the transition probability matrix, \(\Pi_i\), is no longer constant. A time inhomogeneous Markov chain with a time-varying transition probability matrix is suitable for the setting where no restriction on how the numbers \(N\) and \(M\) are connected is imposed. This general case is beyond the scope of this paper and we identify it as a topic for future work.
    }
    \end{remark}
    
    \begin{remark}
    \label{rem:(dis)adv}
    \rm{
    	In this paper we have proposed a probabilistic algorithm for scheduling NCSs. Our tool is new and differs from the techniques existing currently in the literature. We highlight the following features:
	\begin{enumerate}[label = (\alph*), leftmargin=*]
		\item In terms of offline computations required prior to the implementation of the scheduling logic, our technique is close to a static scheduling mechanism. Indeed, in case of the latter, a finite length allocation scheme is computed offline and is repeated eternally, while we compute a finite set of disjoint sets and probabilities for their activation and use the quantities eternally. 
		\item Unlike a dynamic scheduling mechanism, our technique does not consider properties of the plants and/or the communication network and/or other components in the NCS at every instant of time. 
		\item Our technique does not adapt to unforeseen/sudden faults in the system. The mechanism needs to be interrupted externally, and a new set of disjoint sets and their associated probabilities are to be fed.
	\end{enumerate}
    }
    \end{remark}
    
    Notice that the matrices \(A_i\), \(B_i\), \(i=1,2,\ldots,N\) and the capacity of the network, \(M\) are beyond our control, whereas there is an element of choice associated to the matrices \(K_i\), \(i=1,2,\ldots,N\), the sets \(c_j\), \(j=1,2,\ldots,N\) and the probabilities \(p_{c_{j}}\), \(j=1,2,\ldots,v\). We address this matter in our solution to Problem \ref{prob:main2}. 
    
    We now present Algorithm \ref{algo:controller_design} to design state-feedback controllers, \(K_i\), \(i=1,2,\ldots,N\), for the plants in the NCS. This is our solution to Problem \ref{prob:main2}. The algorithm first employs the matrices \(A_i\), \(i=1,2,\ldots,N\), the chosen disjoint sets \(c_j\), \(j=1,2,\ldots,v\) and their corresponding probabilities of allocation of the shared network, \(p_{c_j}\), \(j=1,2,\ldots,v\), to obtain symmetric and positive definite matrices, \(P_{\is}\), \(P_{\iu}\), \(i=1,2,\ldots,N\) that satisfy condition \eqref{e:maincondn} with \(k=\iu\), \(i=1,2,\ldots,N\). It then utilizes the matrices \(A_i\), \(B_i\), \(P_{\is}\), \(P_{\iu}\), \(i=1,2,\ldots,N\), to arrive at suitable controllers, \(K_i\), \(i=1,2,\ldots,N\) such that condition \eqref{e:maincondn} holds with \(k=\is\), \(i=1,2,\ldots,N\). A set of feasibility problems is employed for this design. The matrix inequalities involved in Algorithm \ref{algo:controller_design} can be solved by employing standard linear matrix inequalities and bilinear matrix inequalities toolboxes. The following theorem asserts that state-feedback controllers obtained from Algorithm \ref{algo:controller_design} meet our requirements.

	  \begin{algorithm}
    \begin{algorithmic}[1]
       \FOR {\(i=1,2,\ldots,N\)}
            \STATE Solve the following feasibility problem for \(P_{\is}\), \(P_{\iu}\in\R^{d_i\times d_i}\):
            \small
            \begin{align}
            \label{e:feasprob1}
                \minimize\:\:&\:\:1\\
                \sbjto\:\:&\:\:
                \begin{cases}
                    A_{\iu}^\top \P^iA_{\iu} - P_{\iu} \prec 0,\nonumber\\
                    P_{\is} = P_{\is}^\top, P_{\iu} = P_{\iu}^\top,\nonumber\\
                    P_{\is},P_{\iu} \succ 0,\nonumber\\
                    \kappa I_{d_i\times d_i}\preceq P_{\is}, P_{\iu} \preceq I_{d_i\times d_i},\:\kappa > 0\:(\text{small}).
                \end{cases}
            \end{align}
            \normalsize
            \IF {the feasibility problem \eqref{e:feasprob1} admits a solution}
                \STATE Solve the following feasibility problem for \(Y_i\in\R^{m_i\times d_i}\):
                \small
                \begin{align}
            \label{e:feasprob2}
                \hspace*{-3cm}\minimize\:\:&\:\:1\\
                \sbjto\:\:&\:\:
                \begin{cases}
                    &\bigl(A_{i}P_{\is}^{-1}+B_i Y_i\bigr)^\top(\P^i)^{-1}\bigl(A_{i}P_{\is}^{-1}+B_i Y_i\bigr)\nonumber\\
                    &- P_{\is}^{-1} \prec 0.\nonumber
                    \end{cases}
            \end{align}
            \normalsize
            \IF {the feasibility problem \eqref{e:feasprob2} admits a solution}
                \STATE Compute \(K_i\) as follows:
                \small
                \begin{align}
                \label{e:controller_comp}
                    K_i = Y_i P_{\is}.
                \end{align}
                \normalsize
            \ENDIF
            \ENDIF
       \ENDFOR
    \caption{Design of static state-feedback controllers}\label{algo:controller_design}
    \end{algorithmic}
    \end{algorithm}

    \begin{theorem}
    \label{t:mainres2}
        Consider an NCS described in \S\ref{s:prob_stat}. Suppose that Assumption \ref{a:divisibility} holds. Let the matrices \(A_i\), \(B_i\), \(i=1,2,\ldots,N\), the sets \(c_j\), \(j=1,2,\ldots,v\) and the probabilities \(p_{c_j}\), \(j=1,2,\ldots,v\), be given. Suppose that the state-feedback controllers, \(K_i\), \(i=1,2,\ldots,N\) are computed as \eqref{e:controller_comp}. Then condition \eqref{e:maincondn} holds.
    \end{theorem}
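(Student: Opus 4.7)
The plan is to split the proof according to the two values of $k \in \{\is, \iu\}$ appearing in condition \eqref{e:maincondn}, mirroring the two feasibility problems in Algorithm \ref{algo:controller_design}.

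The case $k = \iu$ is essentially free. The constraint $A_{\iu}^\top \P^i A_{\iu} - P_{\iu} \prec 0$ is written verbatim as one of the constraints of the feasibility problem \eqref{e:feasprob1}. Since Algorithm \ref{algo:controller_design} only proceeds past that step when \eqref{e:feasprob1} returns a solution, the matrices $P_{\is}, P_{\iu}$ produced by the algorithm already certify \eqref{e:maincondn} for $k = \iu$ by construction. Nothing else is needed here.

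The substantive half is the case $k = \is$, where $A_{\is} = A_i + B_i K_i$ and the controller is defined by \eqref{e:controller_comp} as $K_i = Y_i P_{\is}$. The plan is to first invert this construction to write $Y_i = K_i P_{\is}^{-1}$, which lets me simplify the bracketed factor appearing in \eqref{e:feasprob2} via
\begin{equation*}
A_i P_{\is}^{-1} + B_i Y_i = A_i P_{\is}^{-1} + B_i K_i P_{\is}^{-1} = (A_i + B_i K_i) P_{\is}^{-1} = A_{\is} P_{\is}^{-1}.
\end{equation*}
I then apply a congruence transformation by the symmetric positive definite matrix $P_{\is}$, which preserves negative definiteness, together with a Schur-complement step to trade the inverses and recover the product-form inequality $A_{\is}^\top \P^i A_{\is} - P_{\is} \prec 0$, which is precisely the $k = \is$ case of \eqref{e:maincondn}.

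The delicate point, and where the entire bookkeeping lives, is this Schur-complement-plus-congruence manipulation: feasibility problem \eqref{e:feasprob2} is deliberately posed in a form that becomes affine in the new unknown $Y_i$ (rather than quadratic in $K_i$), and the transformation above is the standard LMI-based state-feedback synthesis maneuver that converts the computationally tractable surrogate inequality back into the original Lyapunov-type condition \eqref{e:maincondn} required by Theorem \ref{t:mainres1}. Once this algebraic translation is verified, the theorem follows for every $i = 1, 2, \ldots, N$ by running the argument inside the outer for-loop of Algorithm \ref{algo:controller_design}.
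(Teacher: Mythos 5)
Your split into the two cases \(k=\iu\) and \(k=\is\) is exactly the paper's own proof: the \(k=\iu\) inequality is a verbatim constraint of \eqref{e:feasprob1}, and the \(k=\is\) case is handled by the change of variables \(Y_i=K_iP_{\is}^{-1}\) (so that \(A_iP_{\is}^{-1}+B_iY_i=A_{\is}P_{\is}^{-1}\)) followed by a congruence transformation and a Schur complement. The paper runs the same computation in the opposite direction, starting from the Schur-complement form \(\pmat{-\P^i & \P^iA_{\is}\\\bigstar & -P_{\is}}\prec 0\) of the target inequality and applying the congruence \(\diag\bigl((\P^i)^{-1},P_{\is}^{-1}\bigr)\); since both operations are equivalences, the direction is immaterial.

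However, the one step you explicitly defer --- ``a Schur-complement step to trade the inverses'' --- is where the entire proof lives, and if you carry it out it does not close in the form stated. Substituting \(Y_i=K_iP_{\is}^{-1}\) into \eqref{e:feasprob2} and performing the congruence by \(P_{\is}\) yields
\[
A_{\is}^\top(\P^i)^{-1}A_{\is}-P_{\is}\prec 0,
\]
with \((\P^i)^{-1}\), not \(\P^i\), as the middle factor; no further Schur-complement manipulation converts one into the other, and what \eqref{e:maincondn} requires for \(k=\is\) is \(A_{\is}^\top\P^iA_{\is}-P_{\is}\prec 0\). The correct linearized surrogate, namely the Schur complement of
\[
\pmat{-(\P^i)^{-1} & A_iP_{\is}^{-1}+B_iY_i\\\bigstar & -P_{\is}^{-1}}\prec 0,
\]
is \((A_iP_{\is}^{-1}+B_iY_i)^\top\,\P^i\,(A_iP_{\is}^{-1}+B_iY_i)-P_{\is}^{-1}\prec 0\), which carries \(\P^i\) rather than its inverse. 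So either the middle factor in \eqref{e:feasprob2} must be read as \(\P^i\) (in which case your argument goes through verbatim), or the feasibility problem as printed does not certify \eqref{e:maincondn}. The paper's own \eqref{e:pf2_step4} exhibits the same \((\P^i)^{-1}\), so this is at least partly an inconsistency in the source; the point for you is that ``trade the inverses'' is precisely where such an error hides, and a complete proof must display this computation rather than assert it.
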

    
     \begin{proof} 
     Fix \(j\in\{1,2,\ldots,v\}\) and \(i\in c_j\).
    Suppose that there exists a solution \(P_{\is}\), \(P_{\iu}\) to the feasibility problem \eqref{e:feasprob1}. By Schur complement, the inequality
    \begin{align}
    \label{e:pf2_step1}
        A_{\iu}^\top \P^i A_{\iu} - P_{\iu} \prec 0
    \end{align}
    is equivalent to
  \small
        \(\pmat{-\P^i & \P^i A_{\iu}\\\bigstar & -P_{\iu}} \prec 0\).
    \normalsize
    We need to design \(K_i\) such that the following inequality holds:
    \begin{align}
    \label{e:pf2_step3}
        \pmat{-\P^i & \P^i A_{\is}\\\bigstar & -P_{\is}} \prec 0.
    \end{align}
    Let \(K_i\) be computed as described in \eqref{e:controller_comp}. We perform a congruence transform to the left-hand side of \eqref{e:pf2_step3} by \small\(\diag\pmat{(\P^i)^{-1},P_{\is}^{-1}}\)\normalsize and obtain
       \( \pmat{-(\P^i)^{-1} & (A_i P_{\is}^{-1}+B_i Y_i)\\\bigstar & -P_{\is}^{-1}}\).
    From \eqref{e:pf2_step3} it follows that the above quantity is negative definite. By Schur complement, we have
    \begin{align}
    \label{e:pf2_step4}
        (A_i P_{\is}^{-1}+B_i Y_i)^\top (\P^i)^{-1}(A_i P_{\is}^{-1}+B_i Y_i) - P_{\is}^{-1} \prec 0.
    \end{align}
    Consequently, if the feasibility problem \eqref{e:feasprob2} admits a solution \(Y_i\), then \(K_i\) computed as \eqref{e:controller_comp} satisfies \eqref{e:pf2_step3}. Conditions \eqref{e:pf2_step1} and \eqref{e:pf2_step4} together lead to \eqref{e:maincondn}.

    Since \(i\in c_j\) was chosen arbitrarily, it follows that condition \eqref{e:maincondn} holds for each plant \(i\in c_j\). Moreover, since \(j\in\{1,2,\ldots,v\}\) was chosen arbitrarily, we have that condition \eqref{e:maincondn} holds for each plant \(\displaystyle{i\in\bigcup_{j=1}^{v}c_j}\). By Lemma \ref{lem:auxres} i), the assertion of Theorem \ref{t:mainres2} follows.
    \end{proof}

      \begin{remark}
    \label{rem:compa4}
    \rm{
        Our design of static state-feedback controllers in Algorithm \ref{algo:controller_design} involves standard linear algebraic techniques for stabilization of Markovian jump linear systems. The analysis is similar in spirit to \cite{Zhang2009}, where stability of \emph{a} Markovian jump linear system was considered. We deal with a more general setting of \emph{simultaneous} design of state-feedback controllers for \(N\) such systems.
        }
    \end{remark}

%
    We now present numerical experiments to demonstrate our results.
\section{Numerical experiments}
\label{s:num_ex}
	Our first experiment involves linearized models of benchmark control systems.
	\begin{experiment}
	\label{ex:num_exp1}
	\rm{
    Consider an NCS with number of plants, \(N = 2\) and capacity of the shared communication network, \(M = 1\).
    \begin{itemize}[label = \(\circ\), leftmargin = *]
        \item Plant \(i=1\) is a discretized version of a linearized batch reactor system presented in \cite[\S IVA]{Walsh2002} with sampling time \(0.05\) units of time.  We have
           \begin{align*}
        A_1 &= \pmat{1.0795 & -0.0045 & 0.2896 & -0.2367\\-0.0272 & 0.8101 & -0.0032 & 0.0323\\
        0.0447 & 0.1886 & 0.7317 & 0.2354\\0.0010 & 0.1888 & 0.0545 & 0.9115},
 	B_1 = \pmat{0.0006 & -0.0239\\0.2567 & 0.0002\\0.0837 & -0.1346\\0.0837 & -0.0046},\\
 	\end{align*}
        \item Plant \(i=2\) is a discretized version of a linearized inverted pendulum system presented in \cite[\S 4]{Rehbinder2004} with sampling time \(0.05\) units of time. We have
      \begin{align*}
  		A_2 &= \pmat{1.0123 & 0.0502\\0.4920 & 1.0123},\:B_2 = \pmat{0.0123\\0.4920}.
    \end{align*}
    \end{itemize}
	
	Notice that the plants are open-loop unstable and \(N\%M = 0\). We compute \(v = N/M = 2\). Let \(c_1 = \{1\}\), \(c_2 = \{2\}\) and \(p_{c_1} = p_{c_2} = 0.5\). We first design static state-feedback controllers, \(K_i\), \(i=1,2\) such that condition \eqref{e:maincondn} holds for each \(i\) in \eqref{e:plants}. We employ Algorithm \ref{algo:controller_design} for this purpose. We obtain
    \begin{align*}
        K_1 = \pmat{0.0152761 & -0.8159748 & -0.2394377 & -0.7514747\\
   2.3245781 &  0.0798596  & 1.622477 &  -1.0654847}
    \end{align*}
    and
    \begin{align*}
        K_2 = \pmat{-2.3973087 & -1.4308615}.
    \end{align*}
    We have
    \begin{align*}
        P_{1s} &= \pmat{974.82022  & 115.25221 &  693.51383 & -223.88521\\
   115.25221 &  1022.0729  & 160.38138 &  109.95335\\
   693.51383 &  160.38138  & 768.15463 & -219.94088\\
  -223.88521 &  109.95335  & -219.94088 &  1250.1576},\\
        P_{1u} &= \pmat{1678.8234 &  300.05968 &  1271.4766 & -378.75625\\
   300.05968 &  1465.4904 &  391.07683  & 368.29291\\
   1271.4766 &  391.07683 &  1213.8238 & -279.44358\\
  -378.75625 &  368.29291 & -279.44358 &  1483.7789},\\
        Y_{1} &= \pmat{ 0.0005645 & -0.0006647 & -0.0008519 & -0.0005914\\
   0.0024236 & -0.0001203 & -0.0001764 & -0.0004387},
    \end{align*}
    and
    \begin{align*}
        P_{2s} &= \pmat{1717.7113 &  138.39564\\138.39564 &  50.218134},\\
        P_{2u} &= \pmat{2580.3612 & 512.67656\\512.67656 &  184.31981},\\
        Y_{2} &= \pmat{0.0011569 & -0.0316812}.
    \end{align*}
    It follows that
    \begin{align*}
        A_{1_s}^\top \P^1 A_{1_s} - P_{1_s}
        &=\pmat{-51.553004 & -7.8596573 & -69.500984 & -13.199701\\
  -7.8596573 & -480.12758 & -40.530729 &  37.248709\\
  -69.500984 & -40.530729 & -230.28674 &  106.35376\\
  -13.199701 &  37.248709 &  106.35376 & -300.5394}
    \prec 0_{d_1\times d_1},
	\end{align*}
    \begin{align*}
        A_{1_u}^\top \P^1 A_{1_u} - P_{1_u}
        &=\pmat{-48.482389 &  0.1252562 & -62.165288 & -15.778984\\
   0.1252562 & -428.29213 & -25.838462 &  53.124646\\
  -62.165288 & -25.838462 & -182.71532 &  86.522247\\
  -15.778984 &  53.124646 &  86.522247 & -289.02844}
  \prec 0_{d_1\times d_1},
    \end{align*}
    and
   \begin{align*}
        A_{2_s}^\top \P^2 A_{2_s} - P_{2_s}
        &=\pmat{-26.390428 & -3.0495068\\
  -3.0495068 & -30.242636}
    \prec 0_{d_2\times d_2},
    \end{align*}
    \begin{align*}
        A_{2_u}^\top \P^2 A_{2_u} - P_{2_u}
        =\pmat{-25.479355 & -3.4282391\\
  -3.4282391 & -25.646787}
  \prec 0_{d_2\times d_2}.
    \end{align*}
	We then employ Algorithm \ref{algo:impl} to generate probabilistic scheduling logics. We set \(T=1000\). It follows that \(f_{c_1} = 500\) and \(f_{c_2} = 500\). We generate \(10\) different sequences \(\gamma(0)\), \(\gamma(1),\ldots\), \(\gamma(999)\). Corresponding to each sequence, we pick \(10\) different initial conditions \(x_i^0\in[-10,+10]^{d_i}\), \(i=1,2\) and plot \(\norm{x_i(t)}^{2}\), \(i=1,2\). The resulting trajectories (up to time \(t=100\)) are illustrated in Figures \ref{fig:plant1} and \ref{fig:plant2}. Stochastic stability of each plant in the NCS under consideration follows.
    \begin{figure}
        \centering
        \includegraphics[scale = 1]{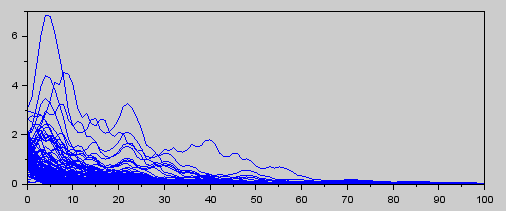}
        \caption{\(\norm{x_1(t)}^{2}\) versus \(t\)}\label{fig:plant1}
    \end{figure}
    \begin{figure}
        \centering
        \includegraphics[scale = 1]{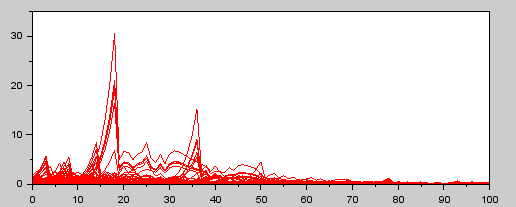}
        \caption{\(\norm{x_2(t)}^{2}\) versus \(t\)}\label{fig:plant2}
    \end{figure}
    }
    \end{experiment}
    
 Our next experiment is geared towards testing scalability of the proposed techniques.
   \begin{experiment}
   \label{ex:numex2}
   \rm{
   	We fix capacity of the shared network as \(M = 10\) and carry out the following procedure for various values of the total number of plants, \(N\), such that Assumption \ref{a:divisibility} holds:
	\begin{enumerate}[label = (\roman*), leftmargin = *]
		\item We generate unstable matrices \(A_i\in\R^{5\times 5}\) and vectors \(B_i\in\R^{5\times 1}\) with entries from the interval \([-2, 2]\) and the set \(\{0,1\}\), respectively, chosen uniformly at random and ensuring that each pair of matrices \((A_i, B_i)\), \(i = 1,2,\ldots,N\) is controllable.
		\item We compute \(v=N/M\), construct the set \(\Svec\) containing all subsets of \(\{1,2,\ldots,N\}\) with \(M\) distinct elements, choose a step size \(h=0.001\), and compute \(r\) to be biggest integer satisfying \(rh < 1\).
		\item For all distinct sets \(c_j\in\Svec\), \(j=1,2,\ldots,v\) and probabilities \(p_{c_j}\in\{h,2h,\ldots,rh\}\), \(j=1,2,\ldots,v\) satisfying \(\displaystyle{\sum_{j=1}^{v}{p}_{{c}_{j}}} = 1\), we employ Algorithm \ref{algo:controller_design} until a suitable set of state-feedback controllers, \(K_i\), \(i=1,2,\ldots,N\) are designed. We note the corresponding \(c_j\) and \(p_{c_{j}}\), \(j=1,2,\ldots,v\) and proceed to Step (iv). If no such set of controllers is found, then we report a failure.
		\item We employ Algorithm \ref{algo:impl} to generate probabilistic scheduling logics. We set \(T=1000\) and generate a sequence \(\gamma(0)\), \(\gamma(1),\ldots\), \(\gamma(999)\).
	\end{enumerate}
	
	The above set of steps was implemented by employing the LMI solver toolbox and PENBMI toolbox in MATLAB R2020a on an Intel 17-8550U, 8 GB RAM, 1 TB HDD PC with Windows 10 operating system. The time taken to conduct the experiment for various choices of \(N\) are summarized in Table \ref{tab:data_tab}. Not surprisingly, we observe that
the computation time under consideration increases as the number of plants in an NCS increases.
   }
   \end{experiment}
    \begin{table}[http]
	\centering
	{
	\begin{tabular}{|c | c | c|c|}
		\hline
		\(N\) & \(M\) & Result & Time taken (in sec)\\
		\hline
		\(100\) & \(10\) & Success & \(93\)\\
		\hline
		\(200\) & \(10\) & Success & \(1183\)\\
		\hline
		\(500\) & \(10\) & Success & \(10367\)\\
		\hline
		\(700\) & \(10\) & Success & \(33710\)\\
		\hline
		\(1000\) & \(10\) & Success & \(75726\)\\
		\hline
	\end{tabular}}
    \vspace*{0.2cm}
	\caption{Data for numerical experiment}\label{tab:data_tab}
	\end{table}
	
\section{Conclusion}
\label{s:concln}
    In this paper we presented a probabilistic algorithm to design scheduling logics for NCSs whose shared communication networks have limited capacity. We operated under the assumption that communication between plants and their controllers is not affected by any form of communication uncertainties. A next natural research direction is the design of probabilistic algorithms that construct scheduling logics for NCSs under communication uncertainties like time delays, data losses, quantization errors, etc. This matter is currently under investigation and will be reported elsewhere.


\begin{thebibliography}{1}

\bibitem{Dai2010}
{\sc S.-L. Dai, H.~Lin, and S.~S. Ge}, {\em Scheduling-and-control codesign for
  a collection of networked control systems with uncertain delays}, IEEE
  Transactions on Control Systems Technology, 18 (2010), pp.~66--78.

\bibitem{quevedo2020}
{\sc A.~Kundu and D.~E. Quevedo}, {\em Stabilizing scheduling policies for
  networked control systems}, IEEE Transactions on Control of Network Systems,
  7 (2020), pp.~163--175.

\bibitem{ghi}
{\sc A.~Kundu and D.~E. Quevedo}, {\em Design of periodic scheduling and
  control for networked systems under random data loss}, IEEE Transactions on
  Control of Network Systems,  (2021).
\newblock 8 (4), 1788-1798.

\bibitem{Liberzon}
{\sc D.~Liberzon}, {\em Switching in systems and control}, Systems \& Control:
  Foundations \& Applications, Birkh\"auser Boston Inc., Boston, MA, 2003.

\bibitem{Fridman2015}
{\sc K.~Liu, E.~Fridman, and K.~H. Johansson}, {\em Networked control with
  stochastic scheduling}, IEEE Trans. Automat. Control, 60 (2015),
  pp.~3071--3076.

\bibitem{Rehbinder2004}
{\sc H.~Rehbinder and M.~Sanfridson}, {\em Scheduling of a limited
  communication channel for optimal control}, Automatica, 40 (2004),
  p.~491?500.

\bibitem{Walsh2001}
{\sc G.~C. Walsh and H.~Ye}, {\em Scheduling of networked control systems},
  IEEE Control Systems Magazine, 21 (2001), pp.~57--65.

\bibitem{Walsh2002}
{\sc G.~C. Walsh, H.~Ye, and L.~G. Bushnell}, {\em Stability analysis of
  networked control systems}, IEEE Transactions on Control Systems Technology,
  10 (2002), pp.~438--446.

\bibitem{Zhang2009}
{\sc L.~Zhang and E.-K. Boukas}, {\em Stability and stabilization of
  {M}arkovian jump linear systems with partly unknown transition
  probabilities}, Automatica J. IFAC, 45 (2009), pp.~463--468.

\end{thebibliography}





\end{document}